\theoremstyle{plain}
\newtheorem{thm}{\protect\theoremname}
\theoremstyle{definition}
\newtheorem{defn}[thm]{\protect\definitionname}
\theoremstyle{plain}
\newtheorem{lem}[thm]{\protect\lemmaname}
\newenvironment{proof}[1][\protect\proofname]{\par
	\normalfont\topsep6\p@\@plus6\p@\relax
	\trivlist
	\itemindent\parindent
	\item[\hskip\labelsep\scshape #1]\ignorespaces
}{%
	\endtrivlist\@endpefalse
}
\providecommand{\proofname}{Proof}
\theoremstyle{definition}
\newtheorem{example}[thm]{\protect\examplename}
\providecommand{\definitionname}{Definition}
\providecommand{\examplename}{Example}
\providecommand{\lemmaname}{Lemma}
\providecommand{\theoremname}{Theorem}
\begin{document}

\title{Geometric Sparsification of Closeness Relations:\\
Eigenvalue Clustering for Computing Matrix Functions}

\author{Nir Goren, Dan Halperin, and Sivan Toledo\\
Blavatnik School of Computer Science, Tel-Aviv University}
\maketitle
\begin{abstract}
\small\baselineskip=9ptWe show how to efficiently solve a clustering
problem that arises in a method to evaluate functions of matrices.
The problem requires finding the connected components of a graph whose
vertices are eigenvalues of a real or complex matrix and whose edges
are pairs of eigenvalues that are at most $\delta$ away from each
other. Davies and Higham proposed solving this problem by enumerating
the edges of the graph, which requires at least $\Omega(n^{2})$ work.
We show that the problem can be solved by computing the Delaunay triangulation
of the eigenvalues, removing from it long edges, and computing the
connected components of the remaining edges in the triangulation.
This leads to an $O(n\log n)$ algorithm. We have implemented both
algorithms using CGAL, a mature and sophisticated computational-geometry
software library, and we demonstrate that the new algorithm is much
faster in practice than the naive algorithm. We also present a tight
analysis of the naive algorithm, showing that it performs $\Theta(n^{2})$
work, and correct a misrepresentation in the original statement of
the problem. To the best of our knowledge, this is the first application
of computational geometry to solve a real-world problem in numerical
linear algebra. 
\end{abstract}

\section{Introduction}

This
paper proposes and analyzes efficient algorithms to sparsify transitive
closeness relations of points in the Euclidean plane. The problem
that we solve is an important step in a general method to efficiently
compute functions of matrices. 

More specifically, given a set $\Lambda$ of $n$ points in the plane
(real or complex eigenvalues of a matrix, in the underlying problem),
we wish to compute the connected components of a graph $G(\Lambda,\delta)$
whose vertices are the $n$ points and whose edges connect pairs of
points that are within distance at most $\delta$ of each other, for
some real $\delta>0$. Points that are at most $\delta$ apart are
said to be \emph{close}, and in this problem closeness is transitive.
The connected components of $G(\Lambda,\delta)$ partition $\Lambda$
into disjoint minimal well-separated clusters. That is, points in
two different clusters are more than $\delta$ apart, and the clusters
cannot be reduced while maintaining this property.

This problem is an important step in a method proposed by Davies and
Higham~\cite{Davies:2003:SPA}\cite[Chapter~9]{HighamFoM} to compute
a function $f(A)$ of a square real or complex matrix $A$. We describe
the overall method and the role of the eigenvalue-clustering problem
in it in Section~\ref{sec:Background}. Here it suffices to say that
the eigenvalue-clustering problem allows the use of a divide an conquer
strategy while reducing the likelihood of numerical instability. Nearby
eigenvalues in separate clusters create an instability risk; this
is why we want the clusters to be well separated. Large clusters reduce
the effectiveness of the divide and conquer strategy, which is why
clusters should be as small as possible.

Solving the problem in $O(n^{2}\alpha(n))$ time\footnote{In this paper we use the term \emph{time} to refer to the number of
machine instructions, ignoring issues of parallelism, locality of
reference, and so on. When we measure actual running times, we state
that the measurement units is seconds.}, where $\alpha$ is the inverse Ackermann function, is easy. We start
with minimal but illegal singleton clusters, and then test each of
the $n(n+1)/n=O(n^{2})$ eigenvalues pairs for closeness. If they
are close and in different clusters, we merge their two clusters.
The overall time bound assumes that the data structure that represents
the disjoint sets supports membership queries and merge operations
in $O(\alpha(n))$ time each, amortized over the entire algorithm~\cite[Chapter~21]{CLRS2}. 

The main contributions of this paper are two algorithms that solve
this problem in $O(n\log n)$ time. One, presented in Section~\ref{sec:algorithm-real},
is very simple but is only applicable when all the eigenvalues are
real (all the points lie on the real axis). The other algorithm, which
is applicable to any set of points in the plane, is also fairly simple,
but uses a sophisticated building block from computational geometry,
namely the Delaunay triangulation. We present this algorithm in Section~\ref{subsec:alg-delaunay}.
The Delaunay triangulation is also a graph whose vertices are $\Lambda$,
but it is planar and therefore sparse, having only $O(n)$ edges.
It turns out that when edges longer than $\delta$ are removed from
a Delaunay triangulation of $\Lambda$, the remaining graph has exactly
the same connected components as $G(\Lambda,\delta)$. The Delaunay
triangulation can be constructed in $O(n\log n)$ time, giving as
an effective sparsification mechanism for $G(\Lambda,\delta)$. The
algorithm for the real case also constructs a Delaunay triangulation,
but in this case the triangulation is particularly simple. 

Algorithms in computational geometry, like the algorithms that construct
the Delaunay triangulation, can suffer catastrophic failures when
implemented using floating-point arithmetic. Therefore, we implemented
our algorithms using CGAL, a computational-geometry software library
that supports both floating-point arithmetic and several types of
exact arithmetic systems. This implementation is described in detail
in Section~\ref{sec:Implementation}.

Experimental results, presented in Section~\ref{sec:Experimental-Results},
demonstrate that the new algorithms outperform the naive algorithm
by large margins. The results also demonstrate that the extra cost
of exact arithmetic is usually insignificant, at least when using
an arithmetic system that does use floating-point arithmetic whenever
possible. 

Our paper contains two additional contributions. The first, presented
in Section~\ref{subsec:alg-union-find}, is an amortized analysis
of the naive algorithm coupled with a particularly simple data structure
to represent disjoint sets. The analysis shows that even with this
simple data structure, proposed by Davies and Higham (and used many
times in the literature in various variants), the total running time
of the naive algorithm is only $O(n^{2})$. 

The second is an observation, presented in Appendix A, that an alternative
definition of the required eigenvalue partition, proposed by Davies
and Higham is not equivalent to the connected components of $G(\Lambda,\delta)$
and is not particularly useful in the overall method for evaluating
$f(A)$. 

Let's get started.

\section{\label{sec:Background}Background}

\begin{figure*}
\begin{centering}
\begin{tikzpicture}
  \draw [fill=lightgray] (0.00,0) rectangle (0.7,0.7);
  \node at (0.35,0.35) {$A$};

  \draw [->] (0.81,0.35) -- ++(0.7,0);
  \node [above] at (1.16,0.35) {\footnotesize Schur};
  \node [below] at (1.16,0.35) {\footnotesize $\Theta(n^3)$};
 
  \draw [fill=lightgray] (1.62,0) rectangle (2.32,0.7);
  \node at (1.97,0.35) {$Q_S$};

  \draw [fill=lightgray, ultra thin ] (2.43,0.0)
        ++(0,0.7) -- ++(0.7,0.0) -- ++(0,-0.7) 
        -- ++(-0.1,0.0) -- ++(0.0,0.1)
        -- ++(-0.1,0.0) -- ++(0.0,0.1)
        -- ++(-0.1,0.0) -- ++(0.0,0.1)
        -- ++(-0.1,0.0) -- ++(0.0,0.1)
        -- ++(-0.1,0.0) -- ++(0.0,0.1)
        -- ++(-0.1,0.0) -- ++(0.0,0.1)
        -- ++(-0.1,0.0) -- cycle;
  \draw                  (2.43,0) rectangle (3.13,0.7);

  \node at (2.78,0.35) {$T$};

  \draw [->, rounded corners] (2.78,0.8) -- ++(0.0,0.35) -- ++(2.43,0.0) -- ++(0.0,-0.35);
  \node [above] at (3.995,1.15) {\footnotesize cluster};

  \draw [fill=lightgray] (3.24,0) rectangle (3.94,0.7);
  \node at (3.59,0.35) {$Q^*_S$};
 
  \draw [fill=lightgray, ultra thin ] (4.86,0.0)
        ++(0,0.7) -- ++(0.7,0.0) -- ++(0,-0.7) 
        -- ++(-0.1,0.0) -- ++(0.0,0.1)
        -- ++(-0.1,0.0) -- ++(0.0,0.1)
        -- ++(-0.1,0.0) -- ++(0.0,0.1)
        -- ++(-0.1,0.0) -- ++(0.0,0.1)
        -- ++(-0.1,0.0) -- ++(0.0,0.1)
        -- ++(-0.1,0.0) -- ++(0.0,0.1)
        -- ++(-0.1,0.0) -- cycle;
  \draw [fill=red,   red,   ultra thin] (4.86,0.0) ++(0,0.7)   rectangle ++(0.1,-0.1);
  \draw [fill=blue,  blue,  ultra thin] (4.86,0.0) ++(0.1,0.6) rectangle ++(0.1,-0.1); 
  \draw [fill=green, green, ultra thin] (4.86,0.0) ++(0.2,0.5) rectangle ++(0.1,-0.1); 
  \draw [fill=blue,  blue,  ultra thin] (4.86,0.0) ++(0.3,0.4) rectangle ++(0.1,-0.1); 
  \draw [fill=red,   red,   ultra thin] (4.86,0.0) ++(0.4,0.3) rectangle ++(0.1,-0.1); 
  \draw [fill=blue,  blue,  ultra thin] (4.86,0.0) ++(0.5,0.2) rectangle ++(0.1,-0.1); 
  \draw [fill=green, green, ultra thin] (4.86,0.0) ++(0.6,0.1) rectangle ++(0.1,-0.1); 
  \draw                                 (4.86,0.0) rectangle ++(0.7,0.7);
  \node [below] at (5.21,0) {$T$};

  \draw [->] (5.67,0.35) -- ++(0.7,0);
  \node [above] at (6.02,0.35) {\footnotesize reord};
  \node [below] at (6.02,0.35) {\footnotesize $\Theta(n^3)$};

  \draw [fill=lightgray] (6.48,0) rectangle (7.18,0.7);
  \node at (6.83,0.35) {$Q_R$};

  \draw [fill=lightgray, ultra thin ] (7.29,0.0)
        ++(0,0.7) -- ++(0.7,0.0) -- ++(0,-0.7) 
        -- ++(-0.1,0.0) -- ++(0.0,0.1)
        -- ++(-0.1,0.0) -- ++(0.0,0.1)
        -- ++(-0.1,0.0) -- ++(0.0,0.1)
        -- ++(-0.1,0.0) -- ++(0.0,0.1)
        -- ++(-0.1,0.0) -- ++(0.0,0.1)
        -- ++(-0.1,0.0) -- ++(0.0,0.1)
        -- ++(-0.1,0.0) -- cycle;
  \draw [fill=red,   red,   ultra thin] (7.29,0.0) ++(0,0.7)   rectangle ++(0.1,-0.1);
  \draw [fill=red,   red,   ultra thin] (7.29,0.0) ++(0.1,0.6) rectangle ++(0.1,-0.1); 
  \draw [fill=green, green, ultra thin] (7.29,0.0) ++(0.2,0.5) rectangle ++(0.1,-0.1); 
  \draw [fill=green, green, ultra thin] (7.29,0.0) ++(0.3,0.4) rectangle ++(0.1,-0.1); 
  \draw [fill=blue,  blue,  ultra thin] (7.29,0.0) ++(0.4,0.3) rectangle ++(0.1,-0.1); 
  \draw [fill=blue,  blue,  ultra thin] (7.29,0.0) ++(0.5,0.2) rectangle ++(0.1,-0.1); 
  \draw [fill=blue,  blue,  ultra thin] (7.29,0.0) ++(0.6,0.1) rectangle ++(0.1,-0.1); 

  \draw [                   ultra thin] (7.29,0.0) ++(0.2,0.3)   rectangle ++(0,0.4);
  \draw [                   ultra thin] (7.29,0.0) ++(0.0,0.5)   rectangle ++(0.7,0);
  \draw [                   ultra thin] (7.29,0.0) ++(0.4,0.0)   rectangle ++(0,0.7);
  \draw [                   ultra thin] (7.29,0.0) ++(0.2,0.3)   rectangle ++(0.5,0);

  \draw                                 (7.29,0.0) rectangle ++(0.7,0.7);

  \node [below] at (7.64,0.0) {$T_R$};
  \draw [fill=lightgray] (8.10,0) rectangle (8.80,0.7);
  \node at (8.45,0.35) {$Q^*_R$};

  \draw [->, rounded corners] (7.64,0.8) -- ++(0.0,0.35) -- ++(2.43,0.0) -- ++(0.0,-0.35);
  \node [above] at (8.855,1.15) {\footnotesize block Parlett};

  \draw [fill=lightgray, ultra thin ] (9.72,0.0)
        ++(0,0.7) -- ++(0.7,0.0) -- ++(0,-0.7) 
        -- ++(-0.1,0.0) -- ++(0.0,0.1)
        -- ++(-0.1,0.0) -- ++(0.0,0.1)
        -- ++(-0.1,0.0) -- ++(0.0,0.1)
        -- ++(-0.1,0.0) -- ++(0.0,0.1)
        -- ++(-0.1,0.0) -- ++(0.0,0.1)
        -- ++(-0.1,0.0) -- ++(0.0,0.1)
        -- ++(-0.1,0.0) -- cycle;

  \draw [                   ultra thin] (9.72,0.0) ++(0.2,0.3)   rectangle ++(0,0.4);
  \draw [                   ultra thin] (9.72,0.0) ++(0.0,0.5)   rectangle ++(0.7,0);
  \draw [                   ultra thin] (9.72,0.0) ++(0.4,0.0)   rectangle ++(0,0.7);
  \draw [                   ultra thin] (9.72,0.0) ++(0.2,0.3)   rectangle ++(0.5,0);

  \draw            (9.72,0) rectangle (10.42,0.7);
  \node [below] at (10.07,0) {$f(T_R)$};

  \draw [->, rounded corners] (10.07,-0.55) -- ++(0.0,-0.35) -- ++(3.24,0.0) -- ++(0.0,0.35);

  \draw [fill=lightgray] (11.34,0) rectangle (12.04,0.7);
  \node at (11.69,0.35) {$Q_S$};
  \draw [fill=lightgray] (12.15,0) rectangle (12.85,0.7);
  \node at (12.50,0.35) {$Q_R$};

  \draw [fill=lightgray, ultra thin ] (12.96,0.0)
        ++(0,0.7) -- ++(0.7,0.0) -- ++(0,-0.7) 
        -- ++(-0.1,0.0) -- ++(0.0,0.1)
        -- ++(-0.1,0.0) -- ++(0.0,0.1)
        -- ++(-0.1,0.0) -- ++(0.0,0.1)
        -- ++(-0.1,0.0) -- ++(0.0,0.1)
        -- ++(-0.1,0.0) -- ++(0.0,0.1)
        -- ++(-0.1,0.0) -- ++(0.0,0.1)
        -- ++(-0.1,0.0) -- cycle;

  \draw            (12.96,0) rectangle (13.66,0.7);
  \node [below] at (13.31,0) {$f(T_R)$};
  \draw [fill=lightgray] (13.77,0) rectangle (14.47,0.7);
  \node at (14.12,0.35) {$Q^*_R$};
  \draw [fill=lightgray] (14.58,0) rectangle (15.28,0.7);
  \node at (14.93,0.35) {$Q^*_S$};

  \draw [->] (15.39,0.35) -- ++(0.7,0);
  \node [above] at (15.74,0.35) {\footnotesize mult};
  \node [below] at (15.74,0.35) {\footnotesize $\Theta(n^3)$};

  \draw [fill=lightgray] (16.20,0) rectangle (16.90,0.7);
  \node [below] at (16.55,0.0) {$f(A)$};

  \draw [->, rounded corners] (8.45,-0.1) -- ++(0.0,-1.0) -- ++(5.67,0.0)  -- ++(0.0,1.0);
  \draw [->, rounded corners] (6.83,-0.1) -- ++(0.0,-1.2) -- ++(5.67,0.0)  -- ++(0.0,1.2);
  \draw [->, rounded corners] (3.59,-0.1) -- ++(0.0,-1.4) -- ++(11.34,0.0) -- ++(0.0,1.4);
  \draw [->, rounded corners] (1.97,-0.1) -- ++(0.0,-1.6) -- ++(9.72,0.0) -- ++(0.0,1.6);

\end{tikzpicture}
\par\end{centering}
\caption{\label{fig:structure}The overall structure of the Davies-Higham method
for computing a function $f(A)$ of a matrix $A$.}
\end{figure*}
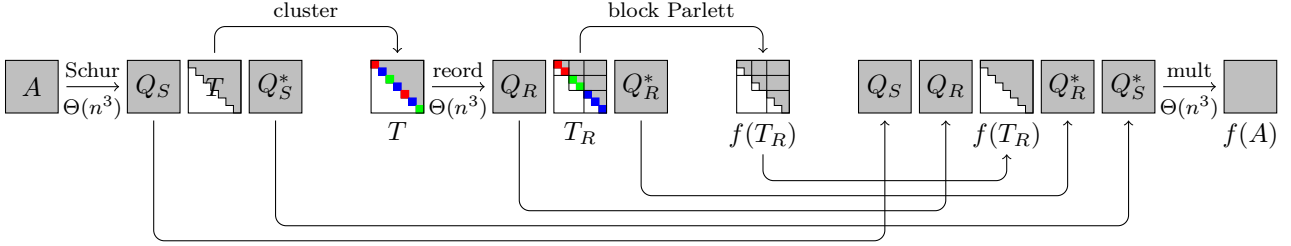
A scalar function $f\colon\mathbb{C}\rightarrow\mathbb{C}$ can be
extended to square real and complex matrices by letting $f$ act on
the eigenvalues of the matrix. That is, if $A\in\mathbb{C}^{n\times n}$
is diagonalizable so $A=VDV^{-1}$ with $D$ diagonal, then $f(A)=Vf(D)V^{-1}$
has the same eigenvectors as $A$ but eigenvalues that have been transformed
by $f$; here $f(D)$ denotes a diagonal matrix with diagonal entries
$(f(D)_{ii}=f(D_{ii})$. The definition can be extended to non-diagonalizable
matrices in one of several equivalent ways~\cite{HighamFoM}. Functions
of matrices have many applications~\cite{HighamFoM}.

For many functions $f$ of practical importance, such as the square
root and exponentiation ($f(x)=e^{x})$, there are specialized algorithms
to compute $f(A)$. There are also several general techniques to evaluate
$f(A)$. Among them is a sophisticated and efficient method due to
Davies and Higham~\cite{Davies:2003:SPA}\cite[Chapter~9]{HighamFoM}.
The problem that we solve in this paper is a subroutine the Davies-Higham
method.

The Davies-Higham method can be viewed in two ways. One is as a generalization
and adaptation of an older method due to Parlett~\cite{Parlett:1974:CFT}\cite[Section~4.6]{HighamFoM}.
The so-called \emph{Schur-Parlett} method computes the Schur decomposition
$A=Q_{S}TQ_{S}^{*}$, where $T$ is triangular and $Q_{s}$ is unitary,
evaluates $f(T)$ using a simple recurrence, and forms $f(A)=Q_{S}f(T)Q_{S}^{*}$.
This method is applicable to any function $f$, but it fails when
$A$ has repeated or highly clustered eigenvalues. When it does work,
this method evaluates $f(A)$ in $\Theta(n^{3})$ time. In particular,
all three steps of the method take cubic time time: the Schur decomposition,
the evaluation of $f(T)$, and the matrix multiplications required
to form $f(A)$ (the latter step can be asymptotically faster if one
uses fast matrix multiplication). The Davies-Higham method, which
is illustrated in Figure~\ref{fig:structure}, partitions the eigenvalues
into well-separated clusters, reorders the Schur decomposition $T=Q_{R}T_{R}Q_{R}^{*}$
so that clusters are contiguous along the diagonal of $T_{R}$, applies
some other algorithm to evaluate $f$ on diagonal blocks of $T_{R}$,
and then applies a block version of Parlett's recurrence to compute
the off-diagonal blocks of $T_{R}$. The partitioning of the spectrum
$\Lambda$ of $A$ into well-separated cluster is designed so that
the solution of the recurrence equations for the off-diagonal blocks
is numerically stable.

The other way to view the Davies-Higham is as a divide-and-conquer
algorithm. The technique that must be applied to evaluate $f$ on
diagonal blocks of $T_{R}$ have super-cubic cost. The technique that
Davies and Higham proposed is a Pade approximation of $f$, and its
cost is approximately quartic in the dimension of the block. Therefore,
it is best to apply this technique to diagonal blocks that are as
small as possible, to attain a total cost that is as close as possible
to cubic, not quartic. That is, the Davies-Higham chops the original
problem into sub-problems that are as small as possible (the diagonal
blocks of $T_{R}$), solves each one using an expensive algorithm,
and then merges the solutions. The splitting and merging phases are
cubic. 

Let us now review the entire Davies-Higham method, as illustrated
in Figure~\ref{fig:structure}. We start by computing the Schur decomposition
$A=Q_{S}TQ_{S}^{*}$. If $A$ is real with complex eigenvalues, we
compute the so-called \emph{real Schur decomposition}. In this case,
complex eigenvalues form conjugate pairs that are represented as $2$-by-$2$
diagonal blocks in $T$ (so $T$ is not triangular but block triangular
with $1$-by-$1$ and $2$-by-$2$ blocks). Next, we partition the
eigenvalues into clusters using a simple clustering rules described
below in Section~\ref{sec:The-Spectrum-Partitioning-Criter}. This
clustering algorithm is the main focus of this paper. In Figure~\ref{fig:structure},
the clusters are represented by coloring the eigenvalues, which lie
along the diagonal of $T$. Now we need to reorder the eigenvalues
so that clusters are contiguous while maintaining the triangular structure
and while maintaining the reordered matrix $T_{R}$ as a Schur factor
of $A$. That is, we transform $T$ into $T_{R}$ using unitary similarity.
The reordering also costs $O(n^{3})$ time~\cite{Bai:1993:SDB,KressnerBlockSchurReorderingTOMS}.
Now we evaluate $f$ on diagonal blocks of $T_{R}$ and then solve
Sylvester equations for the off-diagonal blocks of $f(T_{R})$. The
separation between clusters of eigenvalues is designed to minimize
errors in the solution of these equations. We note that the clustering
criterion proposed by Davies and Higham does not guarantee small errors;
it serves as a proxy for a criterion that is too difficult to use. 

When $A$ is a real matrix with complex eigenvalues, complex eigenvalues
form conjugate pairs and the two eigenvalues in each pair are kept
together in the reordering, in order to maintain the block-diagonal
structure of the Schur factor. We handle this case by including only
one eigenvalue from each pair in the input to the partitioning problem,
the one with positive imaginary part. Its conjugate is then placed
in the same cluster. 

\section{\label{sec:The-Spectrum-Partitioning-Criter}The Spectrum-Partitioning
Criterion}

Davies and Higham define the criteria for the partitionining of the
eigenvalues in two different ways. We present first the definition
that is both algorithmically useful and correct in the sense that
it serves the overall algorithm well.
\begin{defn}
The \emph{$\delta$-closeness graph} $G(\Lambda,\delta)$ of a set
of complex numbers $\Lambda=\{\lambda_{1},\lambda_{2},\ldots,\lambda_{n}\}$
(possibly with repetitions) is the graph whose vertex set is $\Lambda$
and whose edge set consists of all the pairs $\{\lambda_{i},\lambda_{j}\}$
for which $|\lambda_{i}-\lambda_{j}|\leq\delta$. 

We denote the connected components of $G=G(\Lambda,\delta)$ by $C_{1}^{(G)},\ldots,C_{k}^{(G)}$,
and when the graph is clear from the context, we denote the components
by $C_{1},\ldots,C_{k}$. We view connected components as sets of
vertices, so $C_{1},\ldots,C_{k}$ are disjoint sets of eigenvalues.
We denote the connected component in $G$ that contains $\lambda_{i}$
by $C^{(G)}(\lambda_{i})$ and by $C(\lambda_{i})$ if the graph is
clear from the context.
\end{defn}
Partitioning $\Lambda$ by connected components in $G(\Lambda,\delta)$
is effective in the Davies-Higham algorithm. This partitioning reduces
(in a heuristic sense explained in their paper) the risk of instability
while admitting efficient partitioning algorithms, including one proposed
in the Davies and Higham paper. We note that Davies and Higham imply
that the connected components of $G(\Lambda,\delta)$ are equivalent
to the a partition that satisfies two specific conditions, but this
is not the case, as we show in Appendix~A.

\section{\label{sec:algorithm-real}An Algorithm for Real Eigenvalues}

Davies and Higham proposed a partitioning algorithm that works for
both real and complex eigenvalues, but we start with a new algorithm
that is specialized for the real case and is both simpler and more
efficient than the Davies-Higham algorithm. We sort the eigenvalues
so that $\lambda_{\pi(1)}\leq\lambda_{\pi(2)}\leq\cdots\leq\lambda_{\pi(n)}$
($\pi$ is a permutation that sorts the eigenvalues). We then create
an integer vector $g$ of size $n$ and assign
\[
g_{i}=\begin{cases}
1 & |\lambda_{\pi(i)}-\lambda_{\pi(i-1)}|>\delta\\
0 & |\lambda_{\pi(i)}-\lambda_{\pi(i-1)}|\leq\delta\;,
\end{cases}
\]
denoting $\lambda_{\pi(0)}=-\infty$ so that $g_{1}$ is always $1$.
The vector $g$ marks gaps in the spectrum (the set of eigenvalues).
We now compute the prefix sums of $g$,
\[
c_{i}=\sum_{j=1}^{i}g_{i}\;.
\]
Now $c_{i}$ is the label (index) of the cluster that eigenvalue $\lambda_{\pi(i)}$
belongs to.

The running time of this technique is $\Theta(n\log n)$ assuming
that we use a comparison-based sorting algorithm. 

We defer the correctness proof for this algorithm to the next section,
because the proof is a special case of a more general analysis for
the complex case, but we state the result here.
\begin{thm}
\label{thm:correctness-of-sort-and-split}Partitioning $\Lambda\subset\mathbb{R}$
by sorting the eigenvalues and splitting whenever two adjacent eigenvalues
are more than $\delta$ away creates a partition that is identical
to the connected components of $G(\Lambda,\delta)$.
\end{thm}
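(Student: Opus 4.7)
The plan is to prove that the two partitions coincide by showing each algorithmic cluster is contained in a single connected component of $G(\Lambda,\delta)$ and, conversely, that no edge of $G(\Lambda,\delta)$ joins eigenvalues in different algorithmic clusters. Since both are partitions of the same set $\Lambda$, the two inclusions imply equality.

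For the first inclusion, I would observe that inside a single algorithmic cluster the $g_i$ values are all zero except at the first index of the cluster. By definition of $g$, this means that for every pair of consecutive (in the sorted order) eigenvalues $\lambda_{\pi(i-1)}$ and $\lambda_{\pi(i)}$ within the same cluster we have $|\lambda_{\pi(i)}-\lambda_{\pi(i-1)}|\leq\delta$, so $\{\lambda_{\pi(i-1)},\lambda_{\pi(i)}\}$ is an edge of $G(\Lambda,\delta)$. Chaining these edges provides, for any two eigenvalues in the same algorithmic cluster, an explicit path in $G(\Lambda,\delta)$, so they share a connected component.

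The second inclusion is where the totally ordered structure of $\mathbb{R}$ does the real work. Suppose $\lambda_{\pi(a)}$ and $\lambda_{\pi(b)}$ lie in different algorithmic clusters with $a<b$. By construction, there is some index $k$ with $a<k\leq b$ where $g_k=1$, i.e., $\lambda_{\pi(k)}-\lambda_{\pi(k-1)}>\delta$. I would then argue that no edge of $G(\Lambda,\delta)$ can cross this gap: for any $i\leq k-1$ and any $j\geq k$, monotonicity of the sort gives $\lambda_{\pi(j)}-\lambda_{\pi(i)}\geq \lambda_{\pi(k)}-\lambda_{\pi(k-1)}>\delta$, so $\{\lambda_{\pi(i)},\lambda_{\pi(j)}\}$ is not an edge. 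Hence $G(\Lambda,\delta)$ decomposes into a disjoint union of subgraphs indexed by the algorithmic clusters, and no path can connect $\lambda_{\pi(a)}$ to $\lambda_{\pi(b)}$.

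There is essentially no hard step; the only thing to pin down carefully is the ``no edge crosses a gap'' claim, which rests on the fact that on the real line a cut point in the sorted order partitions \emph{all} of $\Lambda$ into a left set and a right set, so the minimum distance between the two sides is exactly the gap. This is precisely the property that fails in the plane and forces the more elaborate Delaunay argument of Section~\ref{subsec:alg-delaunay}; I would flag this in a single sentence to motivate why the proof cannot be reused verbatim in the complex case, even though the real case is subsumed by it as a degenerate situation.
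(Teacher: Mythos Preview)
Your argument is correct. The two inclusions are exactly what is needed, and the key step---that no edge of $G(\Lambda,\delta)$ can cross a gap because monotonicity of the sorted order forces $\lambda_{\pi(j)}-\lambda_{\pi(i)}\geq\lambda_{\pi(k)}-\lambda_{\pi(k-1)}>\delta$ whenever $i<k\leq j$---is stated and justified cleanly.

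However, your route differs from the paper's. The paper does not prove Theorem~\ref{thm:correctness-of-sort-and-split} directly; instead it derives it as a degenerate instance of Theorem~\ref{thm:cc-of-delaunay-graph}. The argument there is: when $\Lambda\subset\mathbb{R}$, the Voronoi cells are vertical slabs, so the Delaunay graph is precisely the path on the sorted eigenvalues, i.e., its edges are the consecutive pairs $\{\lambda_{\pi(i-1)},\lambda_{\pi(i)}\}$. Pruning edges longer than $\delta$ therefore yields exactly the sort-and-split clusters, and Theorem~\ref{thm:cc-of-delaunay-graph} says these coincide with the components of $G(\Lambda,\delta)$. Your approach is more elementary and self-contained---it needs no Voronoi or Delaunay machinery and would stand even if Section~\ref{subsec:alg-delaunay} were removed---whereas the paper's approach buys unification: the real case becomes a one-line corollary of the general planar result, which is the organizational choice the authors announce when they defer the proof. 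You already anticipate this connection in your closing remark, so you are aware of both viewpoints.
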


\section{\label{sec:algorithm-complex}An Algorithm for Complex Eigenvalues}

If $A$ has complex eigenvalues, the simple method of Section~\ref{sec:algorithm-real}
no longer works. Later in this section we present a very efficient
algorithm to partition complex eigenvalues, but we start with a simpler
variant that is closer to the algorithm proposed by Davies and Higham.

\subsection{\label{subsec:alg-build-the-graph}The Davies-Higham Partitioning
Algorithm.}

Davies and Higham propose a partitioning algorithm that works for
both real and complex eigenvalues, but their paper (and Higham's book)
do not prove that it is correct, does not specify exactly how clusters
are represented, and does not analyze the complexity of the algorithm.

Their algorithm is incremental. It maintains a partitioning of a subset
of the eigenvalues. When step $t$ ends, the partitioning is valid
for the subgraph that contain all the vertices (eigenvalues) and all
the edges $\{\lambda_{i},\lambda_{j}\}$ for which $i\leq t$ or $j\leq t$.
Initially, every eigenvalue forms a singleton cluster, because we
have not considered any edges (closeness relations) yet. (The text
of Davies and Higham imples that the singleton cluster for $\lambda_{i}$
is formed only in the beginning of step $i$ and only if $\lambda_{i}$
is not already part of a larger cluster, but this only makes the algorithm
a little harder to understand.)

In step $i$, the algorithm computes the distances $|\lambda_{i}-\lambda_{j}|$
for all $j>i$ such that $\lambda_{j}$ is not already in the same
cluster as $\lambda_{i}$. If the distance is smaller than $\delta$,
meaning that a new edge has been discovered in the graph, the clusters
that contain $\lambda_{i}$ and $\lambda_{j}$ are merged.

Davies and Higham do not spell out exactly how clusters are represented,
but their text implies that they record in a vector $c$ the label
of the cluster of every eigenvalue; that is, $c_{i}$ is the label
(integer) of the cluster that contains $\lambda_{i}$. When they merge
clusters with indices $x$ and $y>x$, they relabel eigenvalues in
$y$ as belonging to $x$, and they decrease by $1$ every label higher
than $y$. The relabeling of clusters higher than $y$ may simplify
later phases in the overall algorithms, because at the end of the
algorithm the clusters are labeled contiguously $1,\ldots,k$, but
it is clearly also possible to relabel the clusters once at the end
in $\Theta(n)$ operations.

Davies and Higham do not prove the correctness of this algorithm (but
this is fairly trivial) and they do not analyze its complexity. The
loop structure of their algorithm shows that its running time is $\Omega(n^{2})$
and $O(n^{3})$ but the exact asymptotic complexity is not analyzed.

\subsection{\label{subsec:alg-union-find}Disjoint-Sets Data Structures for Connected
Components}

The Davies-Higham partitioning algorithm is an instantiation of a
generic method to compute connected components. The generic method
maintains a disjoint-sets data structure, initialized to a singleton
for every vertex. The edges of the graph are scanned, in any order.
For each edge $\{i,j\}$, the method determines the sets $S_{i}$
and $S_{j}$ that $i$ and $j$ belong to, respectively, and if $S_{i}\neq S_{j}$,
the two sets are merged. The correctness of the Davies-Higham algorithm
is a consequence of the correctness of this general method.

There are many ways to represent the sets and to perform the operations
that find $S_{i}$ given $i$ (the so-called \emph{find} operation)
and merge $S_{i}$ and $S_{j}$ (the so-called \emph{union }operation).
The most efficient general-purpose data structure uses rooted trees
to represent the sets and optimizations called \emph{union by rank}
and \emph{path compression} to speed up the operations; this data
structure and algorithms guarantee an $O(m\alpha(n))$ complexity
for a sequence of $m$ union or find operations on a set of $n$ elements
(in all the subsets combined), where $\alpha$ is the inverse Ackermann
function, whose value for any practical value of $n$ is at most $4$.
The Davies-Higham algorithm performs $n(n-1)/2=\Theta(n^{2})$ \emph{find
}operations and at most $n-1$ \emph{union} operations (since every
union operation reduces the number of subsets by $1$), so the complexity
with this data structure is $O(n^{2}\alpha(n))$. 

However, in our case even the simpler data structure and algorithms
that Davies and Higham proposed guarantee an $O(n^{2})$ complexity.
The number of union operations at most $n-1$, so even if every union
operation costs $\Theta(n)$ to scan the vector $c$ and to relabel
some of the components, the total cost of the union operations is
still $O(n^{2})$. The find operations cost $O(1)$, so their total
cost is again $O(n^{2})$.

\subsection{\label{subsec:alg-delaunay}An Efficient Geometric Partitioning Algorithm}

The $\delta$-closeness graph can have $n(n-1)/2=\Theta(n^{2})$ edges
so constructing the graph requires $\Theta(n^{2})$ operations. The
large number of edges also implies that the total cost of the disjoint-set
operations is high, $\Omega(n^{2})$. 

We have discovered that a sparse graph with only $O(n)$ edges and
that can be constructed in $O(n\log n)$ operations has exactly the
same connected components. This graph is the well-known \emph{Delaunay
triangulation} of the spectrum $\Lambda$, when viewed as a set of
points in the plane. We begin with definitions of the Delaunay triangulation
and of related geometric objects, specialized to the Euclidean plane,
as well with a statement of key properties of them and key relationships
between them. For further details on these objects, see~\cite{ComputationalGeometry2008,FortuneVoronoiDelaunay2018}. 
\begin{defn}
Given a set of points $\Lambda=\{\lambda_{1},\lambda_{2},\ldots,\lambda_{n}\}$
in the plane, the \emph{Voronoi cell} of $\lambda_{i}$ is the set
of all points that are closer to $\lambda_{i}$ than to any other
point in $\Lambda$. A \emph{Voronoi edge} is a nonempty set of points
that are equidistant from $\lambda_{i}$ and $\lambda_{j}\neq\lambda_{i}$
and closer to $\lambda_{i}$ and $\lambda_{j}$ than to any other
point in $\lambda$. A \emph{Voronoi vertex} is a point that is closest
to three or more points in $\Lambda$. The \emph{Voronoi diagram}
of $\Lambda$ is the ensemble of Voronoi faces, edges, and vertices. 
\end{defn}

\begin{defn}
The \emph{Delaunay Graph }of a set of points $\Lambda=\{\lambda_{1},\lambda_{2},\ldots,\lambda_{n}\}$
in the plane is the dual of their Voronoi diagram: $\{\lambda_{i},\lambda_{j}\}$
is an edge of the Delaunay graph if and only if the cells of $\lambda_{i}$
and $\lambda_{j}$ share an edge. 
\end{defn}
We note that if the Voronoi cells of $\lambda_{i}$ and $\lambda_{j}$
share a single point, then this point is a Voronoi vertex and not
a Voronoi edge, and in such a case $\{\lambda_{i},\lambda_{j}\}$
is \emph{not} an edge of the Delaunay graph. In many cases it is convenient
to view an edge of the Delaunay graph not only as a pair of vertices
(points in the plane), but also as a line segment, but for our application
this is not important. The \emph{Delaunay triangulation }is any completion
of a Delaunay graph to a triangulation of the plane. 

The efficiency and correctness of our algorithm depends on two key
properties of the Delaunay graph.
\begin{lem}
\label{lem:planar-property}\cite[Theorem~9.5]{ComputationalGeometry2008}
Delaunay graphs and Delaunay triangulations are planar graphs.
\end{lem}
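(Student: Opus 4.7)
The plan is to exhibit an explicit planar drawing of the Delaunay graph by exploiting the planar structure of the Voronoi diagram. The Voronoi diagram partitions the plane into convex cells bounded by Voronoi edges that meet at Voronoi vertices, and this partition is by construction a planar subdivision. Since the Delaunay graph is defined as the dual of this subdivision, planarity is essentially a ``dual of a planar subdivision is planar'' result, which I would prove by constructing a concrete embedding.

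First I would place each vertex $\lambda_i$ of the Delaunay graph at the site's actual location, which lies in the interior of its own Voronoi cell $V_i$. For each Delaunay edge $\{\lambda_i,\lambda_j\}$, the cells $V_i$ and $V_j$ share a Voronoi edge $e_{ij}$; I would pick an interior point $m_{ij}$ of $e_{ij}$ and draw the Delaunay edge as the polyline $\lambda_i\to m_{ij}\to \lambda_j$. Convexity of the Voronoi cells, combined with the fact that $\lambda_i$ is in the interior of $V_i$ and $m_{ij}$ lies on the boundary of $V_i$, guarantees that each segment stays inside the closure of its Voronoi cell, with only $m_{ij}$ on the boundary.

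Next I would verify that no two drawn edges cross, by case analysis. If two Delaunay edges share an endpoint $\lambda_i$, the two contributing segments emanate from the interior point $\lambda_i$ of the convex polygon $V_i$ toward interior points of two distinct boundary edges, so they meet only at $\lambda_i$. If two Delaunay edges share no endpoint, their polylines are confined to the interiors of disjoint pairs of Voronoi cells, plus the two distinct Voronoi edges $e_{ij}\neq e_{kl}$, which are themselves disjoint in their interiors; hence the polylines do not meet. This produces a planar embedding of the Delaunay graph. To extend the statement to the Delaunay triangulation, I would note that the triangulation is obtained from the Delaunay graph by adding chords inside those faces (arising from cocircular point configurations) that are not already triangles, and such chords can always be added inside a simple polygonal face without introducing crossings.

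The main obstacle will be handling degeneracies cleanly: Voronoi edges can be unbounded rays, several Delaunay edges can be incident to a single Voronoi vertex, and cocircular points give rise to non-triangular faces. I would deal with unbounded Voronoi edges by noting that any point along the ray (not at infinity) serves as a valid $m_{ij}$, and with the shared-Voronoi-vertex case by choosing the $m_{ij}$ strictly in the relative interior of each $e_{ij}$ so that distinct edges of the embedding are routed through distinct points. Once these points are fixed, the case analysis above goes through uniformly, and the extension to a full triangulation reduces to the classical fact that every simple polygon admits a triangulation by non-crossing diagonals.
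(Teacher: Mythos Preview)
The paper does not prove this lemma; it is stated as a citation of \cite[Theorem~9.5]{ComputationalGeometry2008} and used as a black box. So there is nothing in the paper to compare your argument against beyond the fact that the authors defer to a standard reference.

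Your proposed proof is correct and is one of the classical arguments for this fact. The routing of each Delaunay edge $\{\lambda_i,\lambda_j\}$ as a two-segment polyline through an interior point $m_{ij}$ of the shared Voronoi edge, together with the convexity of Voronoi cells and the observation that an interior-to-boundary segment of a convex set meets the boundary only at its endpoint, does yield a crossing-free drawing. Your handling of the degeneracies (unbounded Voronoi edges, cocircular sites) is also adequate for a proof sketch. One small point worth making explicit in the shared-endpoint case: you check only the two half-segments inside the common cell $V_i$, but you should also note that the remaining half-segments lie in the closures of distinct cells and touch their boundaries only at points $m_{ij}$, $m_{ik}$ lying in the relative interiors of distinct Voronoi edges, so they cannot meet either. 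This follows from the same convexity fact you already invoke, but it deserves a sentence.

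For context, the proof in the cited textbook takes a somewhat different route: it works directly with the straight-line embedding and uses the empty-disk characterization (your Lemma~\ref{lem:closed-disk-property}) to derive a contradiction from a hypothetical crossing of two Delaunay edges. Your dual-of-a-planar-subdivision argument is equally standard and arguably more conceptual, since it makes the planarity an immediate consequence of the planar structure of the Voronoi diagram rather than relying on a separate geometric lemma about empty disks.
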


\begin{lem}
\label{lem:closed-disk-property}\cite[Theorem~9.6 part~ii]{ComputationalGeometry2008}$\{\lambda_{i},\lambda_{j}\}$
is an edge of the Delaunay graph if and only if there is a closed
disk that contains $\lambda_{i}$ and $\lambda_{j}$ on its boundary
and does not contain any other point of $\Lambda$.
\end{lem}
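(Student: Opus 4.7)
The plan is to prove both directions by exploiting the one-to-one correspondence between Voronoi edges (seen as subsets of the plane) and Delaunay edges. Throughout, I will use the metric characterization of Voronoi cells and edges provided in the preceding definition.

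For the forward direction, suppose $\{\lambda_i,\lambda_j\}$ is an edge of the Delaunay graph. By definition the Voronoi cells of $\lambda_i$ and $\lambda_j$ share a Voronoi edge $e$, which is a nonempty set by assumption. I would pick a point $p$ in the relative interior of $e$, so that $p$ is equidistant from $\lambda_i$ and $\lambda_j$, say at distance $r=|p-\lambda_i|=|p-\lambda_j|$, and strictly closer to $\lambda_i$ and $\lambda_j$ than to every other $\lambda_k\in\Lambda$, i.e.\ $|p-\lambda_k|>r$. The closed disk of radius $r$ centered at $p$ then has $\lambda_i,\lambda_j$ on its boundary and excludes every other point of $\Lambda$, proving one direction.

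For the reverse direction, suppose a closed disk $D$ of radius $r$ and center $c$ has $\lambda_i,\lambda_j$ on its boundary and no other point of $\Lambda$ inside or on $D$. Then $|c-\lambda_i|=|c-\lambda_j|=r$ while $|c-\lambda_k|>r$ for all other $k$, so $c$ sits on the perpendicular bisector of $\lambda_i\lambda_j$ and is strictly closer to $\{\lambda_i,\lambda_j\}$ than to any other point. By continuity of the distance functions, the same strict inequalities persist in a small open interval on the bisector around $c$. This interval is a nonempty set of points equidistant from $\lambda_i,\lambda_j$ and closer to them than to any other $\lambda_k$, which matches the definition of a Voronoi edge shared by the cells of $\lambda_i$ and $\lambda_j$. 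Hence $\{\lambda_i,\lambda_j\}$ is a Delaunay edge.

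The main obstacle is the handling of degenerate configurations, in particular four or more cocircular points of $\Lambda$. In the forward direction one must insist that $p$ be chosen in the \emph{relative interior} of the Voronoi edge rather than at an endpoint, since at a Voronoi vertex several points of $\Lambda$ lie on a common circle and the corresponding disk is not empty in the sense needed; taking $p$ strictly inside $e$ gives the strict inequalities required. In the reverse direction, the analogous subtlety is that if the witnessing disk has more than two points of $\Lambda$ on its boundary, the center $c$ is a Voronoi vertex rather than an interior point of a Voronoi edge, but the continuity/openness argument on the bisector still produces an adjacent segment of shared boundary between the two cells, so the conclusion is unaffected. Once these degenerate cases are addressed, the two directions together give the claimed equivalence.
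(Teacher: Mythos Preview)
The paper does not give its own proof of this lemma; it is stated as a citation to \cite[Theorem~9.6 part~ii]{ComputationalGeometry2008} and used as a black box in the proof of Theorem~\ref{thm:cc-of-delaunay-graph}. So there is no in-paper argument to compare against, and your write-up would stand as a self-contained justification of a quoted result.

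On the substance, your argument is correct and is essentially the standard one. For the forward direction, choosing $p$ in the relative interior of the shared Voronoi edge is exactly the right move and yields the strict inequalities needed for an empty closed disk. For the reverse direction, the center $c$ of the witnessing disk already satisfies the defining property of the Voronoi edge between the cells of $\lambda_i$ and $\lambda_j$ (equidistant from both, strictly closer to them than to any other site), so the edge set is nonempty; your continuity argument on the bisector then confirms that this set is a genuine segment rather than a single point, matching the paper's remark that a single shared point would be a Voronoi vertex and would not yield a Delaunay edge.

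One small clean-up: in your final paragraph you discuss, for the reverse direction, the case where the witnessing disk has more than two points of $\Lambda$ on its boundary. That case cannot occur, since the hypothesis is precisely that the closed disk contains no point of $\Lambda$ other than $\lambda_i$ and $\lambda_j$; you can simply delete that sentence. The degeneracy of cocircular points is only relevant to the forward direction, where your insistence on the relative interior of the Voronoi edge already handles it.
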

We are now ready to state and prove our main result. 
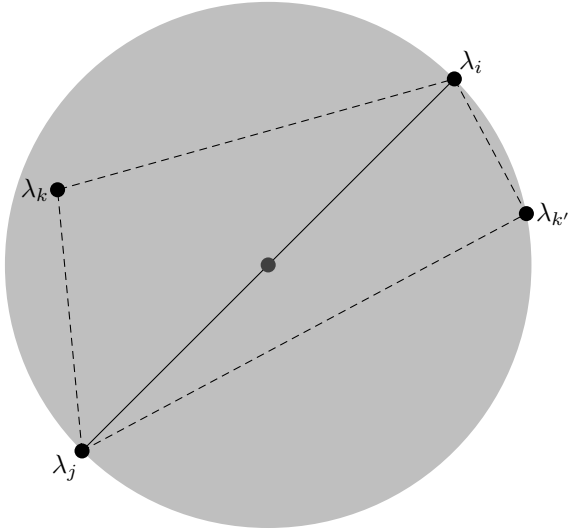
\begin{figure}
\begin{centering}
\begin{tikzpicture}
  \fill [lightgray] (4,4) circle [radius=3.5];
  \draw (6.47487373415,6.47487373415) -- (1.52512626585,1.52512626585);
  \fill[color=black] (6.47487373415,6.47487373415) circle (1mm);
  \fill[color=black] (1.52512626585,1.52512626585) circle (1mm);

  \fill[color=black] (7.43274848141,4.68281612706) circle (1mm);
  \fill[color=black] (1.2,5) circle (1mm);
  \fill[color=darkgray] (4,4)   circle (1mm);
 
  \draw (6.7,6.7) node {$\lambda_{i}$};
  \draw (1.3,1.3) node {$\lambda_{j}$};
  \draw (7.8,4.68) node {$\lambda_{k'}$};
  \draw (0.9,5) node {$\lambda_{k}$};

  \draw[densely dashed] (1.52512626585,1.52512626585) -- (7.43274848141,4.68281612706) -- (6.47487373415,6.47487373415);
  \draw[densely dashed] (1.52512626585,1.52512626585) -- (1.2,5) -- (6.47487373415,6.47487373415);
\end{tikzpicture}
\par\end{centering}
\caption{An illustration of the proof of Theorem~\ref{thm:cc-of-delaunay-graph}.
The eigenvalues $\lambda_{i}$ and $\lambda_{j}$ lie on a diameter
of the gray disk. The illustration shows both $\lambda_{k}$ in the
interior of the disk and $\lambda_{k'}$ on its boundary. The length
of the diameter is at most $\delta$ but the dashed segments are all
strictly shorter than the diameter.}
\end{figure}

\begin{thm}
\label{thm:cc-of-delaunay-graph}Let $\Lambda=\{\lambda_{1},\lambda_{2},\ldots,\lambda_{n}\}$
be a set of points in the plane, let $G(\Lambda,\delta)$ be the graph
whose vertex set is $\Lambda$ and whose edge set contains all the
pairs $\{\lambda_{i},\lambda_{j}\}$ for which the Euclidean distance
between $\lambda_{i}$ and $\lambda_{j}$ is at most $\delta$, for
some real $\delta>0$. Let $D(\Lambda)$ be the Delaunay graph of
$\Lambda$ and let $D(\Lambda,\delta)$ the subset of the graph that
contains only Delaunay edges with length at most $\delta$. We claim
that $G(\Lambda,\delta)$ and $D(\Lambda,\delta)$ have identical
connected components. 
\end{thm}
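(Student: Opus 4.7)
The plan is to prove the two inclusions of components separately. Since every edge of $D(\Lambda,\delta)$ has length at most $\delta$ and is therefore automatically an edge of $G(\Lambda,\delta)$, the components of $D(\Lambda,\delta)$ refine those of $G(\Lambda,\delta)$, giving one direction for free. For the reverse direction it suffices to show that whenever $\{\lambda_i,\lambda_j\}$ is an edge of $G(\Lambda,\delta)$, its two endpoints lie in the same connected component of $D(\Lambda,\delta)$. I would prove this by strong induction on the Euclidean length $|\lambda_i-\lambda_j|$, ordered within the finite multiset of pairwise distances in $\Lambda$.

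For the inductive step, fix such a $G$-edge $\{\lambda_i,\lambda_j\}$ and let $B$ be the closed disk having the segment $[\lambda_i,\lambda_j]$ as a diameter. Two cases arise. In the first case, $B\cap\Lambda=\{\lambda_i,\lambda_j\}$; then $B$ is a closed disk that contains $\lambda_i$ and $\lambda_j$ on its boundary and no other point of $\Lambda$, so Lemma~\ref{lem:closed-disk-property} certifies that $\{\lambda_i,\lambda_j\}$ is a Delaunay edge, and since it has length at most $\delta$ it lies in $D(\Lambda,\delta)$. In the second case there exists $\lambda_k\in B\cap\Lambda$ with $\lambda_k\notin\{\lambda_i,\lambda_j\}$, and the geometric heart of the argument is that every such $\lambda_k$ satisfies
\[
|\lambda_i-\lambda_k| < |\lambda_i-\lambda_j| \quad\text{and}\quad |\lambda_j-\lambda_k| < |\lambda_i-\lambda_j|.
\]
In particular, both $\{\lambda_i,\lambda_k\}$ and $\{\lambda_k,\lambda_j\}$ are $G$-edges of strictly smaller length, so the induction hypothesis connects each pair inside $D(\Lambda,\delta)$ and thereby connects $\lambda_i$ and $\lambda_j$.

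The step I expect to be the main obstacle is the strict inequality above, which is precisely what the figure accompanying the statement illustrates. I would verify it via the inscribed-angle (Thales) theorem: a point on the boundary of $B$ other than $\lambda_i$ and $\lambda_j$ sees the diameter $[\lambda_i,\lambda_j]$ under a right angle, so the two distances from it to $\lambda_i$ and $\lambda_j$ are legs of a right triangle whose hypotenuse has length $|\lambda_i-\lambda_j|$, hence strictly shorter; placing the center of $B$ at the origin and using $x^2+y^2\le r^2$ extends the inequality to interior points in one line. The base case of the induction is automatic, since if $|\lambda_i-\lambda_j|$ is the smallest pairwise distance in $\Lambda$ then the second case cannot occur (it would produce a strictly shorter pair), so the first case closes the argument and the inductive scheme is well founded.
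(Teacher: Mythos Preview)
Your argument is correct and is essentially the paper's own proof: both use the diametral disk on the segment $[\lambda_i,\lambda_j]$, invoke Lemma~\ref{lem:closed-disk-property} to produce a third point $\lambda_k$, and exploit the strict inequalities $|\lambda_i-\lambda_k|,|\lambda_j-\lambda_k|<|\lambda_i-\lambda_j|$ to reduce to shorter pairs. The only difference is cosmetic---you phrase the descent as strong induction on pairwise distance while the paper phrases it as a minimal-counterexample contradiction---and your version is in fact slightly more explicit about the Thales-type inequality that the paper relegates to the figure caption.
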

\begin{proof}
Since the edge set of $D(\Lambda,\delta)$ is a subset of the edge
set of $G(\Lambda,\delta)$, the connected components of $D(\Lambda,\delta)$
are subsets of the connected components of $G(\Lambda,\delta)$. That
is, for every $\lambda_{i}$ we have $C^{(D)}(\lambda_{i})\subseteq C^{(G)}(\lambda_{i})$.
It remains to show that $C^{(G)}(\lambda_{i})\subseteq C^{(D)}(\lambda_{i})$
also holds. We prove this claim by showing that for every edge $\{\lambda_{i},\lambda_{j}\}$
in $G(\Lambda,\delta)$ there is a path between $\lambda_{i}$ and
$\lambda_{j}$ in $D(\Lambda,\delta)$.

Assume the contrary, namely, there is an edge $\{\lambda_{i},\lambda_{j}\}$
in $G(\Lambda,\delta)$ such that there is no path in $D(\Lambda,\delta)$
connecting the vertices $\lambda_{i}$ and $\lambda_{j}$. Of all
such edges, let $\{\lambda_{i},\lambda_{j}\}$ be such that $|\lambda_{i}-\lambda_{j}|$
is the smallest (that is, the Euclidean distance between the eigenvalues
is the shortest). In particular, $\lambda_{i}$ and $\lambda_{j}$
are not connected by an \emph{edge} in the Delaunay graph, even though
the distance between them is at most $\delta$ as the edge $\{\lambda_{i},\lambda_{j}\}$
appears in $G(\Lambda,\delta)$.  Lemma~\ref{lem:closed-disk-property}
implies that every circle with $\lambda_{i}$ and $\lambda_{j}$ on
its boundary contains a third point of $\Lambda$, in the interior
or on its boundary. Consider the specific circle for which $\lambda_{i}$
and $\lambda_{j}$ lie on a diameter and let $\lambda_{k}\in\Lambda$
be a point inside that circle or on its boundary. As we have just
observed, since $\lambda_{i}$ and $\lambda_{j}$ are endpoints of
an edge in $G(\Lambda,\delta)$, the length $|\lambda_{i}-\lambda_{j}|$
is at most $\delta$. Then, both $|\lambda_{i}-\lambda_{k}|$ and
$|\lambda_{k}-\lambda_{j}|$ are smaller than $\delta$. Now, we have
two cases: (i) $\lambda_{i}$ and $\lambda_{k}$ are connected in
$D(\Lambda,\delta)$, and $\lambda_{k}$ and $\lambda_{j}$ are connected
in $D(\Lambda,\delta)$. But this forms a path in $D(\Lambda,\delta)$
between $\lambda_{i}$ and $\lambda_{j}$, which contradicts our assumption
that such a path does not exist. (ii) One of the pairs in Case~(i)
is not connected in $D(\Lambda,\delta)$: then either $\lambda_{i}$
and $\lambda_{k}$ are \emph{not }connected in $D(\Lambda,\delta)$,
or $\lambda_{k}$ and $\lambda_{j}$are \emph{not }connected in $D(\Lambda,\delta)$,
or both are not connected. Obviously, both pairs are connected in
$G(\Lambda,\delta)$ because the distances are shorter than $\delta$.
However this contradicts the fact that $\lambda_{i}$ and $\lambda_{j}$
are the pair with this property having minimum distance between them.
In either case we have a contradiction, which proves our assertion.\hfill
\end{proof}

\noindent We now prove Theorem~\ref{thm:correctness-of-sort-and-split}.
\begin{proof}
When all the eigenvalues are real, their Voronoi cells are infinite
slabs separated by vertical lines that cross the real axis half way
between adjacent eigenvalues. Therefore, all the edges of the Delaunay
triangulation connect adjacent eigenvalues. Delaunay edges longer
than $\lambda$ are prunned from $D(\Lambda,\delta)$, implying that
the sort-and-split algorithm indeed forms the connected components
of $D(\Lambda,\delta)$.\hfill
\end{proof}
Lemma~\ref{lem:planar-property} guarantees that the number of edges
in the Delaunay graph is only $O(n)$. The Delaunay graph can be easily
computed from the Voronoi diagram in $O(n)$ time, and the Voronoi
diagram itself can be computed in $O(n\log n)$ time and $O(n)$ storage~\cite[Theorem~7.10]{ComputationalGeometry2008}.
There are also randomized algorithms that compute the Delaunay triangulation
directly in $O(n\log n)$ expected time~~\cite[Theorem~9.12 and Section~9.6]{ComputationalGeometry2008}.
We can use algorithms that compute the Delaunay triangulation directly
and not the Delaunay graph because every edge that is added to the
graph to triangulate it and that remains after pruning long edges
(its length is at most $\delta$) is also an edge of $G(\Lambda,\delta)$,
so it does not modify the connected components that we compute.

\section{\label{sec:Implementation}Implementation}

We have implemented two different algorithms for the complex case,
one of them using two different arithmetic systems. All the algorithms
were implemented in C++. We implemented the algorithm that constructs
$G(\Lambda,\delta)$ explicitly and that computes its connected components
using linked-lists to represent the disjoint-set data structure. The
complexity of this implementation is $\Theta(n^{2})$. We also implemented
an algorithm that computes the Delaunay triangulation, prunes edges
longer than $\delta$ from it to form $D(\Lambda,\delta)$, and computes
the connected components of $D(\Lambda,\delta)$. The computation
of the Delaunay triangulation was done using the GAL library~\cite{CGAL-ICMS2014}\footnote{The web site of CGAL is \url{www.cgal.org}; it includes the software
and its documentation.}. CGAL allows the use of several arithmetic systems; we tested the
algorithm using three different ones, including two that are exact,
as explained later.

We used exact arithmetic to run the Delaunay triangulation because
computational-geometry algorithms can fail catastrophically when implemented
in floating point arithmetic~\cite{ClassromExamplesOfRobustnessProblems}\footnote{See also \url{http://resources.mpi-inf.mpg.de/departments/d1/projects/ClassroomExamples/}.}.
Briefly, this is caused because the algorithms compute many predicates
of the input objects and of computed geometric objects and the use
of floating-point arithmetic can easily lead to a set of binary outcomes
of the predicates that are not consistent with any input. Arithmetic
operations carried out on exact representations can be expensive and
unlike floating-point arithmetic opeartions, do not necessarily run
in constant time each. Therefore, the use of asymptotic operation
counts, such as $\Theta(n\log n)$ operations, may not have much predictive
value for actual running times. To address this, we report below on
experiments that show that the Delaunay algorithm is faster than a
naive algorithm that constructs all of $G(\Lambda,\delta)$, even
when the latter is implemented in floating-point arithmetic.

More specifically, we ran the Delaunay-based algorithm using \emph{double-precision
floating point arithmetic}, using \emph{rational arithmetic}, and
using \emph{filtered rational arithmetic}. The most informative results
are those of the filtered arithmetic, which is exact but which resorts
to the use of rational numbers only when the use of floating-point
numbers cannot guarantee the correct evaluation of a predicate. This
arithmetic system is usually almost as fast as floating-point arithmetic;
it slows down only in difficult cases. The floating-point performance
results are presented mostly in order to quantify the cost of exact
arithmetic. The pure rational results are presented mostly to demonstrate
the effectiveness of the filtered arithmetic system.

CGAL includes two implementations of algorithms that compute the 2-dimensional
Delaunay triangulation\cite{AmentaEtAl-BRIO-2003,TriangulationsInCGAL,Devillers-Incremental-1998}.
One algorithm is an incremental algorithm that inserts points into
the triangulation in a biased randomized order~\cite{AmentaEtAl-BRIO-2003}.
An insertion of a vertex with degree $d$ costs $\Theta(d)$. The
expected running time of this algorithm is $O(n\log n)$, but the
worst-case running time is $O(n^{2})$. When the set of points is
not known in advance, a different algorithm that maintains a Delaunay
hierarchy~\cite{Devillers-Incremental-1998} often runs faster, but
this is not the case in our application. We tested the Delaunay-hierarchy
variant and it was indeed a bit slower. The asymptotic worst-case
and expected running times of this algorithm are similar to (or worse,
for some insertion orders) those of the random-insertion-order algorithm.
Curiously, CGAL does not include a deterministic worst-case $O(n\log n)$
Delaunay-triangulation algorithm, even though such algorithms are
older than the incremental algorithms~\cite{ComputationalGeometry2008,FortuneVoronoiDelaunay2018};
it appears that they are usually slower in practice than the expected-case
$O(n\log n)$ algorithms. 

The naive algorithms do not require an exact-arithmetic implementation,
since distance computations in the plane are accurate (see~\cite[Sections~3.1 and~3.6]{HighamFoM}),
which means that the computed edge set of $G(\Lambda,\delta)$ will
include all the edges whose exact length is below $\delta-\epsilon$
and will exclude all the edges whose length is above $\delta+\epsilon$,
both for some $\epsilon$ much smaller than $\delta$. We also use
floating-point arithmetic to prune the Delaunay triangulation, for
the same reason.

\subsection{Parallelism}

The algorithms that we propose can be parallelized, but by reducing
the total work to $\Theta(n\log n)$, we essentially eliminate the
need to parallelize this part of the Davies-Higham method.

The algorithm for the real case can be easily and effectively parallelized,
because there are effective parallel algorithms for both sorting and
parallel prefix~\cite{LeightonParallel}. The algorithm for the complex
case requires a parallel two-dimensional Delaunay triangulation. Several
such algorithms have been developed~\cite{BlellochEtAlParallelDelaunay,ParallelDC2DDelaunay,TIPP},
but unfortunately, none of them have been implemented in CGAL or in
any other robust library.

However, given that our algorithm are designed to be used in an $\Omega(n^{3})$
method whose critical path has length $\Omega(n)$ (the Schur decomposition),
the $O(n\log n)$ cost of our algorithm is unlikely to create a significant
Amdahl-type bottleneck even if it remains sequential. 

\section{\label{sec:Experimental-Results}Experimental Results}

\begin{figure*}
\includegraphics[clip,width=0.48\textwidth]{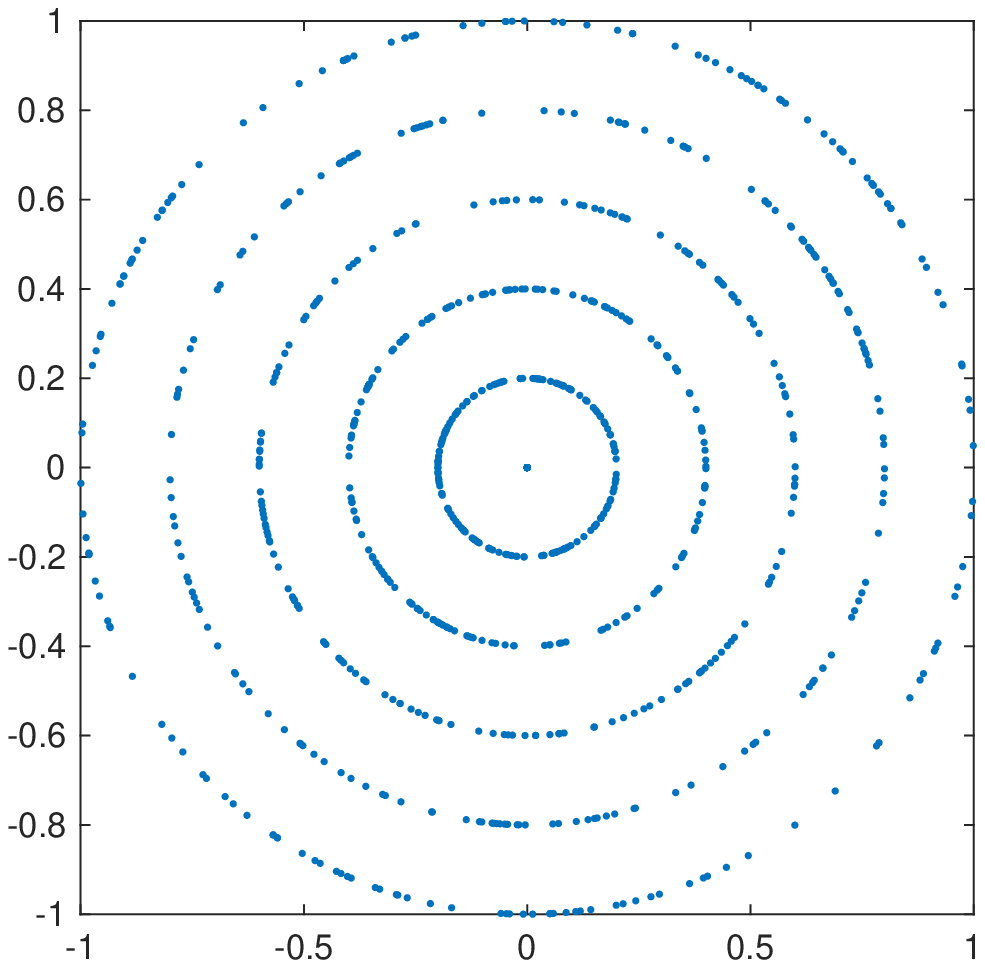}\hfill{}\includegraphics[clip,width=0.48\textwidth]{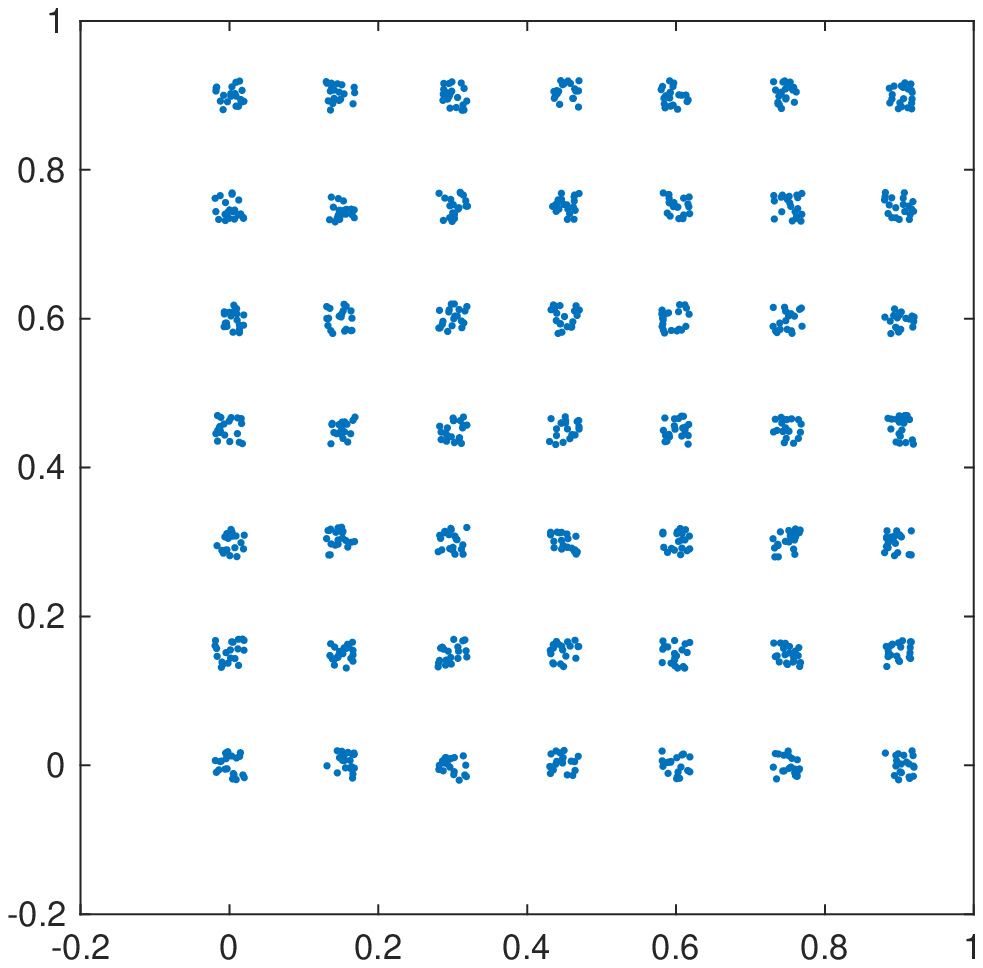}

\caption{\label{fig:eigenvalue-distributions}Examples of eigenvalue distributions
that we used for testing. The plot on the left shows $999$ eigenvalues
placed randomly on one of 5 circles that are spaced $0.2$ apart,
as well as one eigenvalue at the origin. The eigenvalues are distributed
evenly among the circles and the angular position of each eigenvalue
is random with uniform distribution. The plot on the right shows $1000$
eigenvalues placed randomly and uniformly in one of several squares
with a side of length $0.04$ and with centers spaced every $0.15$.
Again, the eigenvalues are distributed among the squares evenly.}

\end{figure*}
\begin{figure*}
\includegraphics[width=0.48\textwidth]{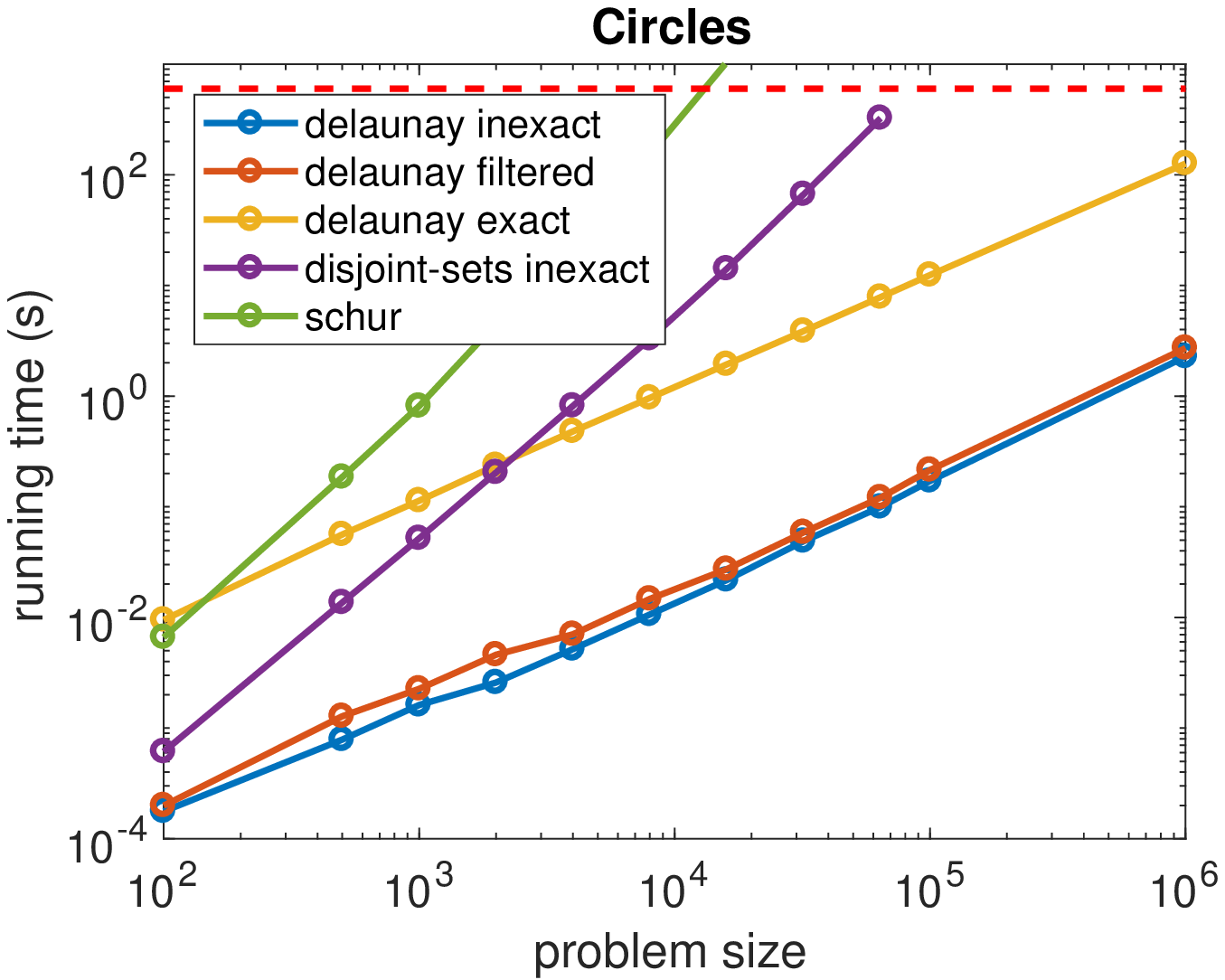}\hfill{}\includegraphics[width=0.48\textwidth]{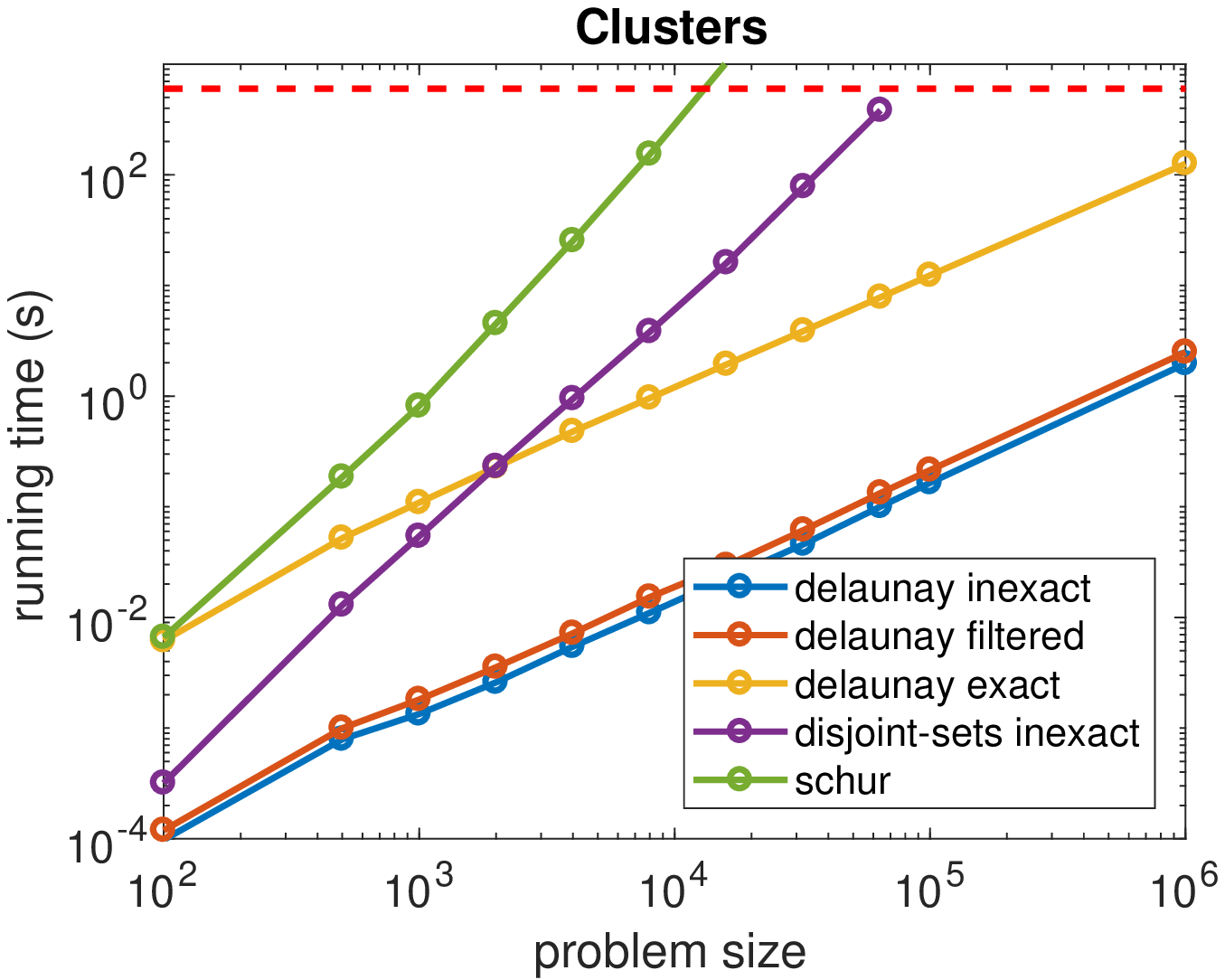}

\caption{\label{fig:results-loglog}The running time of the algorithms on two
different distributions of eigenvalues, as a function of the problem
size $n$. The distribution used to produce the graph on the left
is exactly the distribution shown on the left in Figure~\ref{fig:eigenvalue-distributions}.
The distribution used to produce the graph on the right is similar
to the distribution shown on the right in Figure~\ref{fig:eigenvalue-distributions},
but each cluster was distributed uniformly in a square with side length
$0.02$. The running times did not change much when we modified the
side of these squares to $0.15$ (so that the eigenvalues are distributed
almost uniformly in the unit square) and to $2\times10^{-10}$, a
very tight clustering. We stopped very slow runs after 10 minutes;
the dashed red line shows this limit.}
\end{figure*}
We conducted experiments to assess the running times of the algorithms.
We compiled the codes using the Microsoft C++ compiler version~19
using the O2 optimization level and we used version~4.13 of GCAL.
We also include, for reference, the running times of the $\Theta(n^{3})$
Schur decomposition in Python's \texttt{scipy.linalg} package, which
uses LAPACK and an optimized version of the BLAS to compute the decomposition.
This decomposition is the first step in the Davies-Higham method.

We ran all the experiments on a computer with a quad-core 3.5~GHz
i5-4690K processor and 16~GB of 800~MHz DDR3 DRAM running Windows
10. Our codes are single threaded, so they used only one core. The
Schur-decomposition runs used all the cores.

We evaluated the algorithms using two different distributions of eigenvalues,
illustrated in Figure~\ref{fig:eigenvalue-distributions}. One distribution
includes an eigenvalue at the origin and the rest are placed on concentric
circles with radii that differ by more than $\delta$, so that no
cluster spans more than one circle. Each circle contains approximately
the same number of eigenvalues and the location of each eigenvalue
on its circle is random and uniform. More specifically, we used $\delta=0.1$,
as recommended by Davies and Higham, and radii separation of $0.2$.
The other distribution splits the eigenvalues evenly among squares
whose centers are more than $\delta$ apart. The eigenvalues in each
square are distributed uniformly in the square. We tested this distribution
with $\delta=0.1$ and squares whose centers are $0.15$ apart, and
with sides of $2\times10^{-10}$, $0.02$, or $0.15$. In the first
two cases (sides of $2\times10^{-10}$ and $0.02$) the eigenvalues
in each square form a single cluster, separate from those of other
squares. When the squares have sides of length $0.15$, clusters often
span more than one square (the eigenvalues are distributed approximately
uniformly in the unit square).

The results show that the Delaunay-based algorithm are much faster
than the naive algorithm. The different slopes on the log-log scale
indicate that the algorithms run in approximately polynomial times
but with different polynomial degrees. The results also show that
the overhead of exact arithmetic, when using the filtered implementation,
is minor. The overhead of naive rational arithmetic is considerable;
it is more than 60 times slower than floating-point arithmetic and
about 50 times slower than the filtered rational arithmetic.

\begin{figure*}
\includegraphics[width=0.48\textwidth]{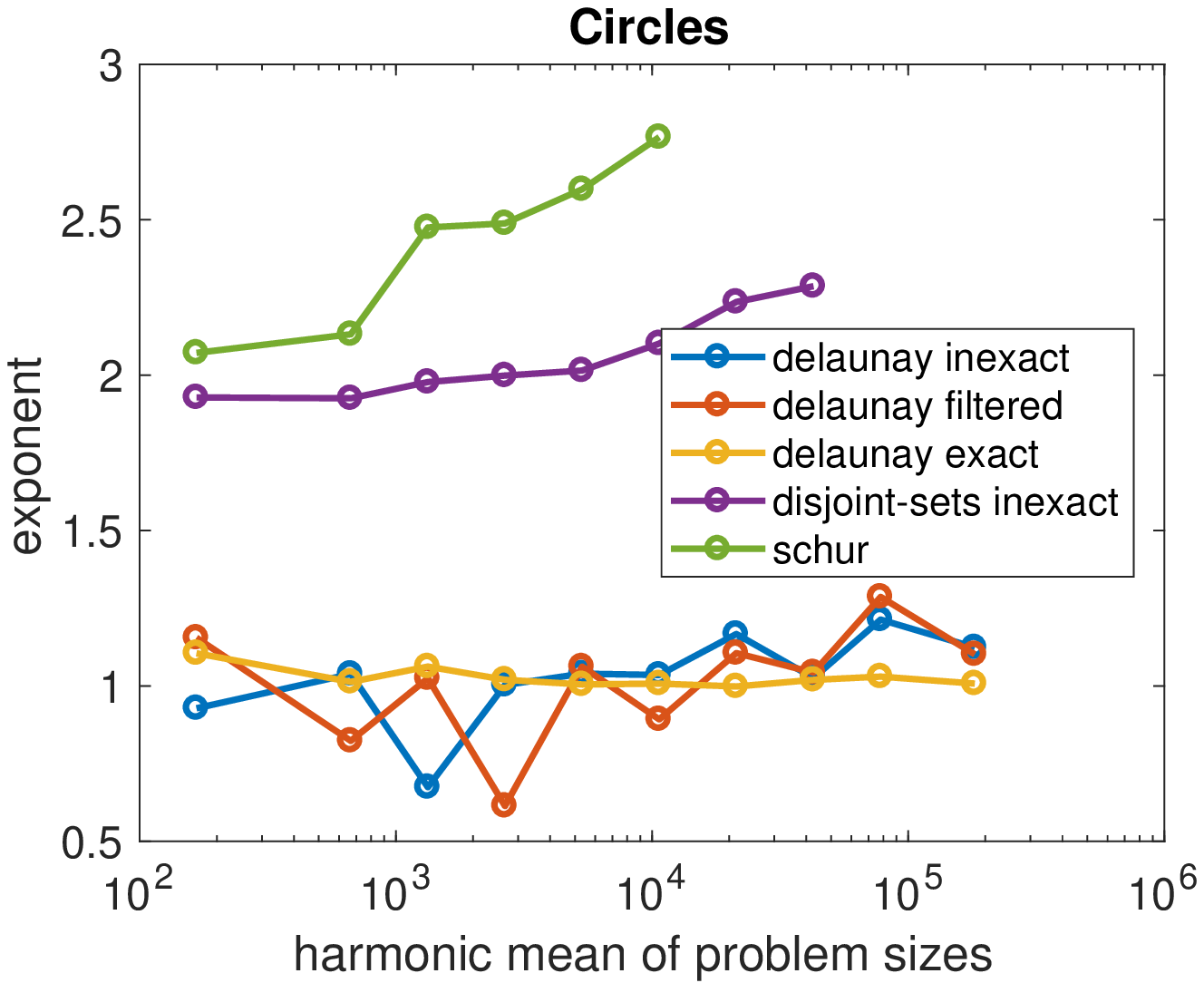}\hfill{}\includegraphics[width=0.48\textwidth]{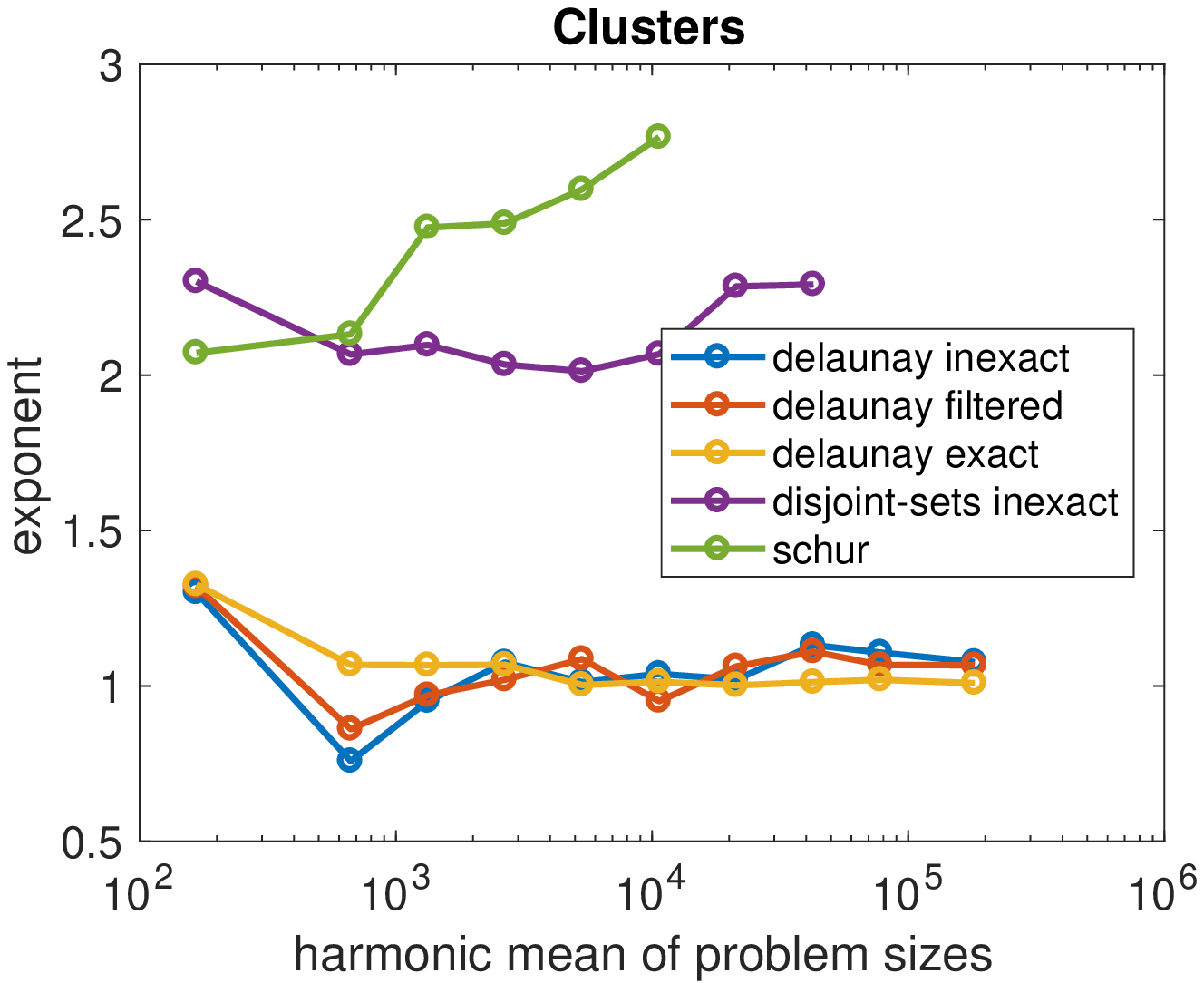}

\caption{\label{fig:results-exponents}Estimates of the polynomial degree of
the running times. For each pair of successive problem sizes $n_{1}$
and $n_{1}$ that resulted in running times of $T(n_{1})$ and $T(n_{2})$
using a particular algorithm, these graphs plot $\log(T(n_{2})/T(n_{2}))/\log(n_{2}/n_{1})$
against the harmonic mean of $n_{1}$and $n_{2}$. }
\end{figure*}
Figure~\ref{fig:results-exponents} estimates the degree of the polynomial
running times. For each algorithm and for each pair of running times
$T(n_{1})$ and $T(n_{2})$ on problems of sizes $n_{1}$ and $n_{1}$,
the graphs show 
\[
\frac{\log\left(T(n_{2})/T(n_{2})\right)}{\log\left(n_{2}/n_{1}\right)}
\]
as a function of the harmonic mean of $n_{1}$ and $n_{2}$,
\[
\left(\frac{n_{1}^{-1}+n_{2}^{-1}}{2}\right)^{-1}\;.
\]
In particular, if $T(n)=n^{d}$, then $\log(T(n_{2})/T(n_{2}))/\log(n_{2}/n_{1})=d$.
The results show that the running times of the Delaunay algorithm
are approximately linear in the problem size (the exponent is close
to $1$) whereas the running times of the naive algorithms are worse
than quadratic. The results also show that the running times of the
rational arithmetic implementation are smoother than those of the
floating-point and filtered implementations. We believe that the worse-than
quadratic behavior of the naive implementation is due to increasing
cache-miss rates, but we have not tested this hypothesis directly.

\subsection{Quadratic Behavior in the Delaunay-Based Algorithm.}

A variant of the concentric-circles eigenvalue distribution induced
quadratic running times in the Delaunay algorithm. In that variant,
the zero eigenvalue had multiplicity $n/6$. Each of the other circles,
and in particular the circle with radius $0.2$, also had about $n/6$
of the eigenvalues. This implies that the Voronoi cell of the origin
is a polygon with approximately $n/6$ edges, which implies that the
degree of the origin in the Delaunay triangulation is also about $n/6$.
This implies that the cost of inserting this point, in the incremental
algorithms, is $\Theta(n)$. This cost recurs for each instance of
the eigenvalue, bringing the total cost to $\Theta(n^{2})$.

There are two ways to address this difficulty; both work well. One
solution is to eliminate (exactly) multiple eigenvalues by sorting
them (e.g, lexicographically). Only one representative of each multiple
eigenvalue need be included in the clustering algorithm; the rest
are automatically placed in the same cluster. The total cost of this
approach is $O(n\log n)$. A hash table can reduce the cost even further.

The other approach is to perturb eigenvalues, say by $\|A\|_{1}\sqrt{\epsilon}$,
where $\epsilon$ is the machine epsilon (unit roundoff) of the arithmetic
in which the eigenvalues have been computed. This may modify the clusters
slightly, but since the Davies-Higham algorithm requires a very large
separation ($0.1$), the difference is unlikely to modify the stability
of the overall algorithm.

\section{\label{sec:Conclusions}Conclusions}

We have presented an efficient algorithm to cluster eigenvalues for
the Davies-Higham method for computing matrix functions. The algorithm
is based on a sophisticated computational-geometry building block.
Its implementation exploits CGAL, a computational-geometry software
library, and uses a low-overhead exact arithmetic. The new algorithm
outperforms the previous algorithm, proposed by Davies and Higham,
by large margins. 

\paragraph*{Acknowledgments}

We thank Olivier Devillers for clarifying the behavior of Delaunay
triangulations in CGAL. This research was support in part by grants
825/15, 863/15, 965/15, 1736/19, and 1919/19 from the Israel Science
Foundation (funded by the Israel Academy of Sciences and Humanities),
by the Blavatnik Computer Science Research Fund, and by grants from
Yandex and from Facebook.

We thank the anonymous referees for useful feedback and suggestions.

\balance

\bibliographystyle{plain}
\bibliography{functions-of-matrices,PolynomialEvaluation}

\section*{Appendix A}

Davies and Higham imply that the connected components of $G(\Lambda,\delta)$
are equivalent to the a partition that satisfies the following two
conditions, but this is not the case.
\begin{defn}
Given some real $\delta>0$, a \emph{$\delta$-admissible} partitioning
of a set of complex numbers $\Lambda=\{\lambda_{1},\lambda_{2},\ldots,\lambda_{n}\}$
(possibly with repetitions) into clusters (subsets) $C_{1},\ldots,C_{k}$
satisfies the following two conditions.
\end{defn}
\begin{enumerate}
\item Separation between clusters: $\min\{|\lambda_{i}-\lambda_{j}|:\lambda_{i}\in C_{p},\lambda_{j}\in C_{q},p\neq q\}>\delta$.
\item Separation within clusters: if $|C_{p}|>1$, then for every $\lambda_{i}\in C_{p}$
there is a $\lambda_{j}\in C_{p}$, $i\neq j$, such that $|\lambda_{i}-\lambda_{j}|\leq\delta$.
\end{enumerate}
Partitioning into connected components is always admissible.
\begin{thm}
The connected components of $G(\Lambda,\delta)$ form an admissible
partitioning of $\Lambda$.
\end{thm}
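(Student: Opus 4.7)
The plan is to verify each of the two admissibility conditions directly from the definition of the edge set of $G(\Lambda,\delta)$ and the standard graph-theoretic properties of connected components. Both verifications are essentially tautological once one unpacks the definitions, so the obstacle here is not mathematical depth but rather making sure no edge case (isolated vertices, repeated eigenvalues, strict versus non-strict inequalities) is overlooked.

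For the between-cluster separation, I would fix any two distinct components $C_p$ and $C_q$ and any $\lambda_i\in C_p$, $\lambda_j\in C_q$. Since $\lambda_i$ and $\lambda_j$ lie in different connected components of $G(\Lambda,\delta)$, there is in particular no edge joining them; by the definition of $G(\Lambda,\delta)$ this forces $|\lambda_i-\lambda_j|>\delta$. Taking the minimum over all such pairs (which is attained since $\Lambda$ is finite) gives the strict inequality required by condition~1.

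For the within-cluster separation, I would fix a component $C_p$ with $|C_p|>1$ and any $\lambda_i\in C_p$. Because $C_p$ is a connected subgraph of $G(\Lambda,\delta)$ containing more than one vertex, $\lambda_i$ cannot be isolated within the subgraph induced on $C_p$; otherwise $\{\lambda_i\}$ would itself be a connected component, contradicting $|C_p|>1$. Hence there exists $\lambda_j\in C_p$ with $j\neq i$ such that $\{\lambda_i,\lambda_j\}$ is an edge of $G(\Lambda,\delta)$, which by definition of the edge set means $|\lambda_i-\lambda_j|\leq\delta$, establishing condition~2.

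The only subtlety I would call out explicitly is that condition~2 is vacuous for singleton components and that $\Lambda$ may contain repeated eigenvalues; repetitions only strengthen both conditions (distance zero pairs are trivially edges and always lie in the same component), so no additional argument is needed. With these two verifications the theorem follows immediately.
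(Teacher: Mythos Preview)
Your proof is correct and follows essentially the same approach as the paper's own proof: both verifications proceed directly from the definition of the edge set and basic properties of connected components, with condition~1 argued via contraposition (a close pair would be an edge, forcing the same component) and condition~2 argued by observing that a vertex in a non-singleton component must have a neighbor. Your version is slightly more explicit about finiteness and the isolated-vertex contradiction, and adds remarks on singletons and repeated eigenvalues, but the underlying argument is the same.
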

\begin{proof}
Let $C_{1},\ldots,C_{k}$ be the connected components of $G(\Lambda,\delta)$.
Admissibility criterion~1 is satisfied because if for some $\lambda_{i}\in C_{p},\lambda_{j}\in C_{q}$
we have $|\lambda_{i}-\lambda_{j}|\leq\delta$, then $\{\lambda_{i},\lambda_{j}\}$
is an edge of $G(\Lambda,\delta)$ so the vertices must be in the
same connected component, implying $p=q$. The second criterion is
also satisfied because if $C_{p}$ is a non-singleton connected component,
then every vertex $\lambda_{i}\in C_{p}$ in the component must have
a neighbor $\lambda_{j}\in C_{p}$, and by the neighborhood relationship
we have $|\lambda_{i}-\lambda_{j}|\leq\delta$.
\end{proof}
However, not every admissible partitioning is a partitioning into
connected components. 
\begin{example}
Let $0<\delta<1/2$ and let $\Lambda$ consist of $\{1,1+\delta,2,2+\delta\}$.
The edge set of $G(\Lambda,\delta)$ consists of pairs of the form
$\{1,1+\delta\}$ and $\{2,2+\delta\}$, and these are also the two
connected components of the graph. This is also an admissible partitioning,
but a trivial partitioning $C_{1}=\Lambda$ is also admissible. The
separation-between-clusters criterion is satisfied trivially; the
minimization is over an empty set. The separation-within-clusters
criterion is also satisfied, because every vertex in $C_{1}$ is close
to some other vertex in $C_{1}$, its neighbor in $G(\Lambda,\delta)$.
\end{example}
The admissibility criteria do guard the numerical stability, but they
allow larger-than-necessary clusters, which increase the computational
complexity of the Davies-Higham method.
\end{document}